\tikzset{>=pxto}
\newcommand{\agdaunimath}{\textsc{agda-unimath}}
\newcommand{\TypeTopology}{\textsc{TypeTopology}}
\DeclareMathOperator{\id}{id}
\DeclarePairedDelimiter{\pa}{(}{)}
\newcommand{\SigmaT}[2]{\sum\pa*{#1}\,{#2}}
\newcommand{\PiT}[2]{\prod\pa*{#1}\,{#2}}
\newcommand{\lambdadot}[2]{\lambda{#1}.\,{#2}}
\DeclareMathOperator{\susp}{\Sigma}
\DeclareMathOperator{\eval}{eval}
\DeclareMathOperator{\North}{N}
\DeclareMathOperator{\South}{S}
\DeclareMathOperator{\merid}{merid}
\newcommand{\colonequiv}{\mathrel{\vcentcolon\mspace{-1mu}\equiv}}
\DeclareMathOperator{\consts}{consts}
\DeclareMathOperator{\fst}{pr_1}
\DeclareMathOperator{\snd}{pr_2}
\newcommand{\One}{\mathbf 1}
\DeclareMathOperator{\ap}{ap}
\DeclareMathOperator{\inl}{inl}
\DeclareMathOperator{\inr}{inr}
\newcommand{\pathcomp}{%
  \mathchoice{\mathbin{\raisebox{0.5ex}{$\displaystyle\centerdot$}}}%
             {\mathbin{\raisebox{0.5ex}{$\centerdot$}}}%
             {\mathbin{\raisebox{0.25ex}{$\scriptstyle\,\centerdot\,$}}}%
             {\mathbin{\raisebox{0.1ex}{$\scriptscriptstyle\,\centerdot\,$}}}
}
\newcommand{\mkTTurl}[1]{\href{https://www.cs.bham.ac.uk/~mhe/TypeTopology/#1.html}{\texttt{#1}}}
\title{Formalizing equivalences without tears}
\author{Tom {de Jong}}%
{School of Computer Science, University of Nottingham, United Kingdom %
  \and \url{http://www.tdejong.com}}
{tom.dejong@nottingham.ac.uk}
{https://orcid.org/0000-0003-1585-3172}
{}
\authorrunning{T. de Jong}
\keywords{3-for-2 property, 2-out-of-3 property, definitional equality,
  equivalence, formalization of mathematics, synthetic homotopy theory, type
  theory}
\begin{document}

\maketitle

\begin{abstract}
  This expository note describes two convenient techniques in the context of
  homotopy type theory for proving---and formalizing---that a given map is an
  equivalence.
  The first technique decomposes the map as a series of basic equivalences,
  while the second refines this approach using the 3-for-2 property of
  equivalences.
  The techniques are illustrated by proving a basic result in synthetic
  homotopy theory.
\end{abstract}

\section{Introduction}

A very common problem in \emph{homotopy type theory (HoTT)}~\cite{HoTTBook} is
to prove that a given map is an equivalence.
The purpose of this short note is to describe convenient techniques for doing
this, in particular when one is interested in formalizing the argument in a
proof assistant.
I claim no originality in the results of this note. Indeed, the technique I wish
to highlight already informs much of the \agdaunimath\ library developed by
Rijke and contributors~\cite{agda-unimath}, while I picked up the other
(decomposition) technique in this note via the Agda development \TypeTopology\
of Escard\'o and collaborators~\cite{TypeTopology} as well as Escard\'o's
comprehensive introduction to univalent foundations and its formalization in
Agda~\cite{Escardo2019}.
Rather, I hope that this note will contribute to a greater awareness
of these techniques, especially among junior type theorists.

\subsection*{Outline}
The note is outlined as follows. \cref{sec:naive} explains why directly proving
that a map is an equivalence is often cumbersome.
\cref{sec:decomposition} describes an alternative technique by decomposing the
given map into a sequence of (smaller) equivalences, while \cref{sec:3-for-2}
further refines this method using the fact that equivalences satisfy the
\emph{3-for-2 property}.
These techniques are then illustrated in \cref{sec:example} in the context of
\emph{synthetic homotopy theory}~\cite[\S8]{HoTTBook}. The example is
self-contained and no prior knowledge of this area is required, although someone
familiar with classical homotopy theory may find its simplicity appealing and
consider it an invitation to learn more (see e.g.~\cite{Shulman2021}).

\subsection*{Terminology}
The 3-for-2 property of equivalences states that if any two maps in a
commutative triangle are equivalences, then so is the third.
This property is perhaps more commonly known as the \emph{2-out-of-3
  property}. Andr\'e Joyal proposed the name 3-for-2 by analogy to how discounts
are often advertised; if you prove two maps are equivalences, then the third is
for free~\cite{Joyal2013}.
Since we are interested in reducing the amount of (formalization) work and wish
to get as much as possible for free, the analogy is quite apt and we prefer
Joyal's terminology in this note.

\subsection*{Foundations}
As mentioned at the very start, this note is concerned with equivalences in
homotopy type theory, although the issues and techniques described should carry
over to other intensional type theories~\cite{Hofmann1955} and classes of maps
that satisfy 3-for-2.\footnote{However, I don't have a good illustration to hand
  of applying the techniques to a class of maps other than the equivalences. The
  class of \(n\)-equivalences, i.e.\ those maps whose \(n\)-truncation is an
  equivalence, comes to mind, but there I have found it easier to work with the
  fact that these maps can be characterized as those maps for which
  precomposition into \(n\)-types is an equivalence~\cite[Lem.~2.9]{CORS2020}
  (see
  also~\cite[\href{https://unimath.github.io/agda-unimath/foundation.truncation-equivalences.html\#properties}{\texttt{k-equivalences}}]{agda-unimath}).
  Moreover, while equivalences are invertible, maps satisfying 3-for-2 need not
  be of course, which means that it may be more difficult to arrange a diagram
  like in~\eqref{decomposition'}.}
We mostly adopt the terminology and notation of the HoTT Book~\cite{HoTTBook},
e.g.\ writing \(\equiv\) for judgemental (definitional) equality, using \(=\) for
identity types (sometimes known as propositional equality), and \(\sim\) for
homotopies (i.e.\ pointwise identities).

\section{A naive approach}\label{sec:naive}

Presented with the problem of showing that a map \(f : A \to B\) is an
equivalence,\footnote{The precise definition of an equivalence is somewhat
  subtle, as discussed at length in \cite[\S4]{HoTTBook}, but it is not too
  important for our purposes.} a direct approach would be to try and construct
a map \(g : B \to A\) together with identifications \({g \circ f} \sim \id_A\)
and \({f \circ g} \sim \id_B\).
What makes this approach infeasible at times is that the construction of \(g\)
may be involved, resulting in nontrivial computations (often involving
\emph{transport}) when showing that the round trips are homotopic to the identity
maps, especially in proof-relevant settings such as HoTT.

In some cases we are lucky and the desired identifications hold definitionally,
in which case the proof assistant can simply do the work for us by unfolding
definitions. We return to using definitional equalities to our advantage in
\cref{sec:3-for-2}.

\section{Decomposition into equivalences}\label{sec:decomposition}

Instead of directly arguing that a given map \(f : {A \to B}\) is an
equivalence, it is often convenient to instead decompose \(f\) as a series of
``building block equivalences'', general maps that we already know to be
equivalences, as depicted below.
\begin{equation}\label{decomposition}
  \begin{tikzcd}
    A \ar[dr,"\simeq",bend right] \ar[rrrrr,"f"] & & & & & B \\
    & X_1 \ar[r,"\simeq"] & X_2 \ar[r,"\simeq"]
    & \dots \ar[r,"\simeq"] & X_n \ar[ur,"\simeq",bend right]
  \end{tikzcd}
\end{equation}

Some quintessential examples of such building block equivalences are as follows.
\begin{description}
\item[Projection from the total space of a contractible family.] Given a type
  \(X\) and a dependent type \(Y\) over it, if each \(Y(x)\) is contractible
  (i.e.\ it is equivalent to the unit type), then the projection map
  \(\fst : \SigmaT{x : X}Y(x) \to X\) is an equivalence.
  In fact, this is an equivalence if and only if each \(Y(x)\) is contractible.
\item[Contractibility of singletons.]\label{proj-equivalences} For any type \(X\)
  and \(x : X\), the type \(\SigmaT{y : X}{x = y}\) is contractible and hence
  the two projection maps from \(\SigmaT{x : X}{\SigmaT{y : X}{x = y}}\) to
  \(X\) are equivalences.
\item[Associativity of dependent sums.] For a type \(A\), and dependent types
  \(B(a)\) and \(C(a,b)\) over \(A\) and \(\SigmaT{a : A}{B(a)}\), respectively,
  the map
  \begin{align*}
    \SigmaT{a : A}{\SigmaT{b : B(a)}{C(a,b)}}
    &\quad\to\quad \SigmaT{p : \SigmaT{a : A}{B(a)}}{C(p)} \\
    (a,b,c) &\quad\mapsto\quad ((a,b) , c)
  \end{align*}
  is an equivalence.
\item[Reindexing dependent sums along an equivalence.] Given an equivalence
  between types \({f : {A \simeq B}}\) and a dependent type \(Y(b)\) over \(B\),
  the assignment
  \begin{align*}
    \SigmaT{a : A}Y(f(a))
    &\quad\to\quad \SigmaT{b : B}{Y(b)} \\
    (a,y) &\quad\mapsto\quad (f(a),y)
  \end{align*}
  is an equivalence.
  And of course there is a similar result for dependent products.
\item[Distributivity of \(\prod\) over \(\sum\).] Given a type \(A\) and
  dependent types \(B\) and \(Y\) over \(A\) and \(\SigmaT{a : A}{B(a)}\),
  respectively, the map
  \begin{align*}
    \PiT{a : A}{\SigmaT{b : B(a)}Y(a,b)} &\quad\to\quad \SigmaT{f : \PiT{a : A}{B(a)}}{\PiT{a : A}Y(a,f(a))} \\
    \alpha &\quad\mapsto\quad (\lambdadot{a}{\fst(\alpha(a)}),\lambdadot{a}{\snd(\alpha(a))})
  \end{align*}
  is an equivalence with inverse \((f,p) \mapsto \lambdadot{a}(f(a),p(a))\).

  This, and especially its nondependent version
  \[
    \PiT{a : A}{\SigmaT{b : B}{Y(a,b)}} \to \SigmaT{f : A \to B}{\PiT{a :
        A}{Y(a,f(a))}},
  \] is traditionally called the ``type theoretic axiom of choice'', but this is
  a misnomer as there is no choice involved, see also~\cite[pp.~32 and 104]{HoTTBook}.
\item[Distributivity of \(\sum\) over \(+\).] Given a type \(A\) and dependent
  types \(X\) and \(Y\) over it, the map
  \begin{align*}
    \SigmaT{a : A}{(X(a) + Y(a))} &\quad\to\quad \SigmaT{a : A}{X(a)} \,+\, \SigmaT{a : A}{Y(a)} \\
    (a,\inl x) &\quad\mapsto\quad \inl(a,x) \\
    (a,\inr y) &\quad\mapsto\quad \inr(a,y)
  \end{align*}
  is an equivalence.
\item[Congruence of type formers.] All type formers respect equivalences. For
  example, if we have \(f : A \simeq X\) and \(g : B \simeq Y\), then
  \((A + B) \simeq (X + Y)\) by applying \(f\) to the elements on the left and \(g\)
  to the elements on the right.
\item[Composition with a fixed path.] Given elements \(x\), \(y\) and \(z\) of a
  type \(X\) and a path \(p_0 : x = y\), the path composition maps

  \begin{minipage}{0.4\linewidth}
    \begin{align*}
      (z = x) &\quad\to\quad (z = y) \\
      p &\quad\mapsto\quad p \pathcomp p_0
    \end{align*}
  \end{minipage}
  and\hspace{1cm}
  \begin{minipage}{0.5\linewidth}
    \begin{align*}
      (x = z) &\quad\to\quad (y = z) \\
      p &\quad\mapsto\quad p_0 \pathcomp p
    \end{align*}
  \end{minipage}\vspace{1em}

  are equivalences.
\end{description}

This list is not exhaustive, but should give a good impression of the available
building blocks. The interested reader may find many more examples in
\cite[{\mkTTurl{UF.EquivalenceExamples}}]{TypeTopology}.

The idea is to reduce the task of proving that \(f\) is an equivalence to
identifying suitable building blocks---the equivalences in
\eqref{decomposition}---\emph{and proving that the diagram \eqref{decomposition}
  commutes}.
It is the latter point that may pose similar difficulties to those explained in
the previous section: the commutativity proof could involve nontrivial
computations. We turn to a refinement in the next section to address this.

We should mention that this technique is still very valuable, especially when we
are not interested in having a particular equivalence, or when there is a unique
such equivalence, e.g.\ when proving that a type \(X\) is contractible.

\section{A refinement using 3-for-2}\label{sec:3-for-2}
To address the issue identified above, we will make use of the fact that
equivalences satisfy the 3-for-2 property:

\begin{lemma}[3-for-2 for equivalences, {\cite[Thm.~4.7.1]{HoTTBook}}]
  In a commutative triangle
  \[
    \begin{tikzcd}
      A \ar[dr,"f"'] \ar[rr,"h"] & & C \\
      &B \ar[ur,"g"']
    \end{tikzcd}
  \]
  if two of the maps are equivalences, then so is the third.
\end{lemma}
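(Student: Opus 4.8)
The plan is to reduce the lemma to three standard closure properties of equivalences that hold regardless of the precise definition one adopts: the identity maps are equivalences; equivalences are closed under composition (so, in particular, the inverse of an equivalence is again an equivalence); and being an equivalence is invariant under homotopy, i.e.\ if $f \sim f'$ and $f$ is an equivalence then so is $f'$. I would also freely use that homotopies may be whiskered by maps on either side. Granting these, the proof is a case split on which two of $f$, $g$, $h$ are assumed to be equivalences, where I write $h \sim g \circ f$ for the commutativity hypothesis.

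If $f$ and $g$ are equivalences, then $g \circ f$ is an equivalence by closure under composition, hence so is $h$ by invariance under the homotopy $h \sim g \circ f$. If instead $g$ and $h$ are equivalences, whiskering $h \sim g \circ f$ on the left by $g^{-1}$ and cancelling with $g^{-1} \circ g \sim \id_B$ yields $f \sim g^{-1} \circ h$; the right-hand side is a composite of equivalences and therefore an equivalence, so $f$ is an equivalence by invariance. The remaining case, $f$ and $h$ equivalences, is symmetric: whiskering on the right by $f^{-1}$ and cancelling with $f \circ f^{-1} \sim \id_A$ gives $g \sim h \circ f^{-1}$, again a composite of equivalences, so $g$ is an equivalence.

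I do not expect a real obstacle here: all the content sits in the three closure properties above, which illustrates the very point of the note---rather than unfolding the (somewhat delicate) definition of equivalence, one assembles the result from building blocks, exactly as in the decomposition technique of \cref{sec:decomposition}. The only thing requiring a little care is the bookkeeping of the whiskerings and cancellation homotopies in the last two cases.
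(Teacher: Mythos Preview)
The paper does not actually prove this lemma: it is stated with a citation to Theorem~4.7.1 of the HoTT Book and then used as a black box, so there is no in-paper argument to compare your proposal against. Your proof is the standard one and is correct. One small slip: in the third case you write $f \circ f^{-1} \sim \id_A$, but since $f : A \to B$ and $f^{-1} : B \to A$, the composite $f \circ f^{-1}$ is an endomap of $B$, so the cancellation homotopy should read $f \circ f^{-1} \sim \id_B$. With that corrected, the argument goes through.
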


Rather than decomposing \(f\), the idea is to simply involve \(f\) into
\emph{any} commutative diagram where all (other) maps are equivalences, as
depicted below.

\begin{equation}\label{decomposition'}
  \begin{tikzcd}[column sep=.5cm]
    X_1 \ar[rrrrrrrr,bend right=3ex,"\simeq",pos=.55,
    start anchor=south east,end anchor=south west]
    \ar[r,"\simeq"]
    & X_2 \ar[r,"\simeq"]
    & \dots \ar[r,"\simeq"] & X_i \ar[r,"\simeq"]
    & A \ar[r,"f"] & B \ar[r,"\simeq"]
    & X_{i+1} \ar[r,"\simeq"]
    & \dots \ar[r,"\simeq"]
    & X_n
  \end{tikzcd}
\end{equation}

By the 3-for-2 property, we can still conclude that \(f\) is an equivalence from
\eqref{decomposition'}, but verifying the commutativity may now be easier,
especially when the type \(X_n\) is simpler than~\(B\).
Indeed, we can often ensure this in practice as illustrated and commented on at
the end of the next section.

\section{An example in synthetic homotopy theory}\label{sec:example}

Usually~\cite[\S7.5]{HoTTBook}, a type is said to be \emph{\(n\)-connected}
if its \(n\)-truncation is contractible. For us, the following
characterization may serve as a definition.%

\begin{proposition}[{\cite[Cor.~7.5.9]{HoTTBook}}]%
  \label{connectedness-characterization}
  A type \(A\) is \(n\)-connected if and only if for every \(n\)-type%
  \footnote{We recall from~\cite[\S7.1]{HoTTBook} that the \(n\)-types are inductively defined for
\(n\ge-2\): a \(-2\)-type is a contractible type and an \((n+1)\)-type is a type
whose identity types are \(n\)-types.}
  \(B\), the
  constants map
  \begin{align*}
    B &\to (A \to B) \\
    b &\mapsto \lambdadot{a}{b}
  \end{align*}
  is an equivalence.
\end{proposition}

We will illustrate the above techniques by giving a slick proof of a well-known
result (see e.g.~\cite[Thm.~8.2.1]{HoTTBook}) in homotopy theory: taking the
suspension of a type increases its connectedness by one.
The reader may wish to compare the proof below to that given in \emph{op.~cit},
or inspects its formalization as part of the \agdaunimath\
library~\cite[{\href{https://unimath.github.io/agda-unimath/synthetic-homotopy-theory.suspensions-of-types.html\#suspensions-increase-connectedness}{\texttt{Suspensions increase connectedness}}}]{agda-unimath}.

For completeness, we recall suspensions in homotopy type theory.

\begin{definition}[Suspension \(\susp\)]
  The \emph{suspension} \(\susp A\) of a type \(A\) is the pushout of the span
  \(\One \leftarrow A \rightarrow \One\). Equivalently, it is the higher
  inductive type generated by two point constructors \(\North,\South : \susp A\)
  (short for \emph{North} and \emph{South}) and a path constructor
  \(\merid : {A \to \North = \South}\) (short for \emph{meridian}).
\end{definition}

\begin{figure}[h]
  \centering
  \includegraphics[scale=0.155]{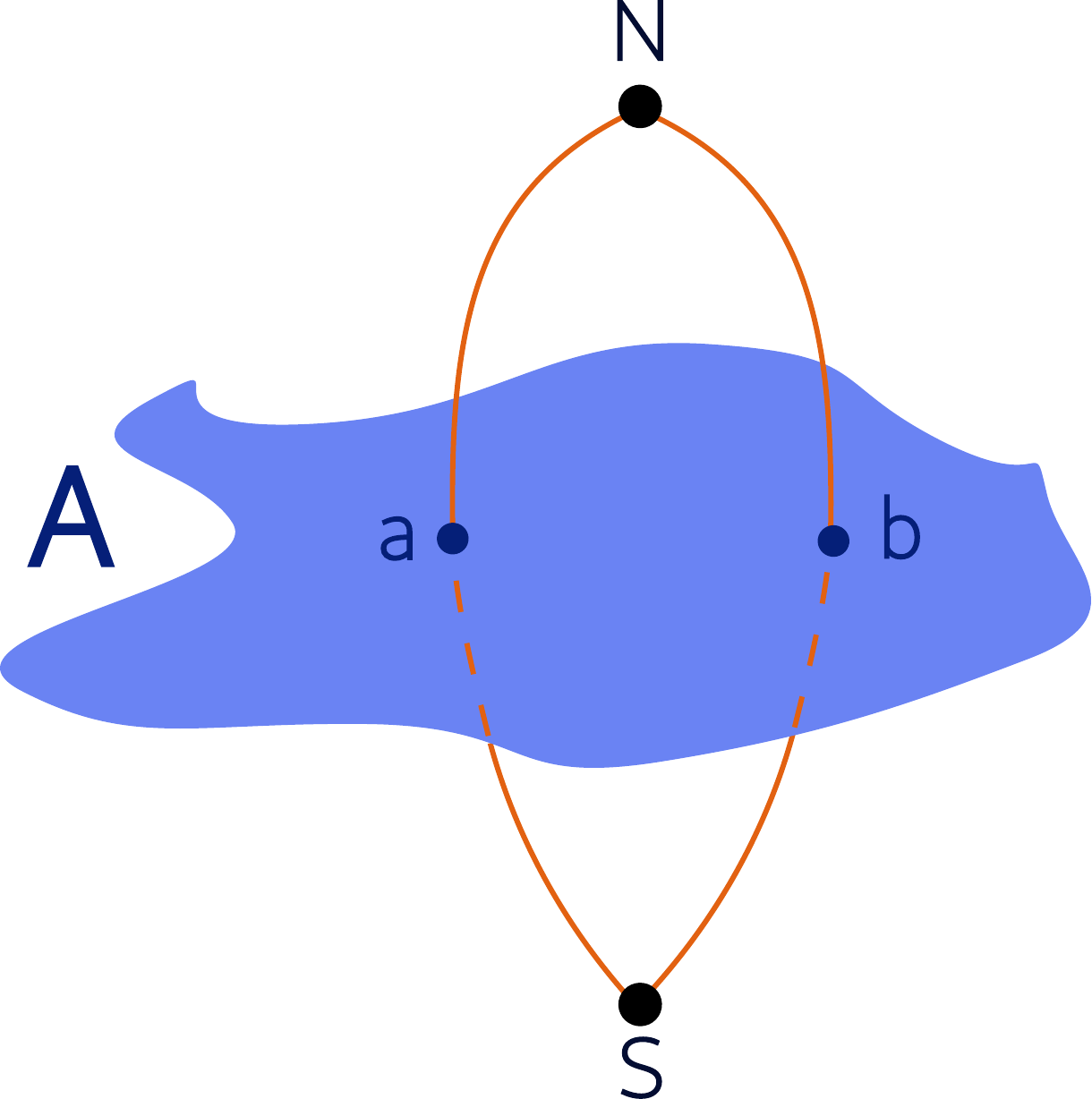}
  \caption*{Illustration of the suspension of a type \(A\).
    The lines from \(\North\) to \(\South\) going through the points \(a : A\) and
    \(b : A\) represent the paths \(\merid(a)\) and \(\merid(b)\),
    respectively.}
\end{figure}

We shall only need the following basic fact about suspensions which is just the
universal property of the suspension as a pushout.
(We recall from~\cite[\S2.2]{HoTTBook} that \(\ap_g\) denotes the action of \(g\)
on the identity types.)

\begin{proposition}[Universal property of the suspension, {\cite[Exer.~6.11]{HoTTBook}}]%
  \label{suspension-universal-property}
  \leavevmode

  \noindent The map
  \begin{align*}
    (\susp A \to B)
    &\quad\to\quad
    \SigmaT{b_N : B}{\SigmaT{b_S : B}{\pa*{A \to b_N = b_S}}} \\
    g
    &\quad\mapsto\quad \pa*{g(\North) , g(\South) , \lambdadot{a}{\ap_g(\merid(a))}}
  \end{align*}
  is an equivalence for all types \(A\) and \(B\).
\end{proposition}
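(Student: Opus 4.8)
The plan is to prove the universal property of the suspension by decomposing the map into a chain of known building-block equivalences, following the technique of \cref{sec:decomposition}. The suspension $\susp A$ is the pushout of $\One \leftarrow A \rightarrow \One$, so by the universal property of pushouts the type $\susp A \to B$ is equivalent to the type of cocones under this span with apex $B$. Concretely, a cocone consists of maps $\One \to B$ on both ends, a map $A \to B$, and two homotopies witnessing the two triangles; written out, this is a type of the shape $\SigmaT{f_N : \One \to B}{\SigmaT{f_S : \One \to B}{\SigmaT{k : A \to B}{\pa*{(f_N \circ !) \sim k} \times \pa*{(f_S \circ !) \sim k}}}}$, where $!$ denotes the unique map $A \to \One$. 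So the first step is to invoke the pushout universal property to replace $\susp A \to B$ by this cocone type.

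The second step is to simplify this cocone type to $\SigmaT{b_N : B}{\SigmaT{b_S : B}{\pa*{A \to b_N = b_S}}}$ by a sequence of basic equivalences: use the equivalence $(\One \to B) \simeq B$ (evaluation at the unique point, i.e.\ reindexing/currying with the unit type) to turn $f_N, f_S$ into elements $b_N, b_S : B$; then note that a homotopy from the constant function at $b_N$ to $k$ over $A$ is the same as a pointwise identity $b_N = k(a)$, i.e.\ we have $\SigmaT{k : A \to B}{(\consts(b_N) \sim k) \times (\consts(b_S) \sim k)}$, and by reassociating dependent sums and using function extensionality / the fact that $\SigmaT{k}{\consts(b_N) \sim k}$ is (fibrewise) contractible-singleton–like, contract away the variable $k$ together with the first homotopy, leaving $\SigmaT{a : A}{b_N = b_S}$-type data, i.e.\ a map $A \to b_N = b_S$ (this uses contractibility of singletons, associativity of $\Sigma$, and distributing $\Sigma$ over $\Pi$, all instances of the listed building blocks). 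Composing all these equivalences yields an equivalence $(\susp A \to B) \simeq \SigmaT{b_N : B}{\SigmaT{b_S : B}{\pa*{A \to b_N = b_S}}}$.

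The final step is to check that this composite equivalence is (homotopic to, or, better, definitionally equal to) the map written in the statement, namely $g \mapsto (g(\North), g(\South), \lambdadot{a}{\ap_g(\merid(a))})$. Here one traces an element $g$ through the chain: the pushout universal property sends $g$ to the cocone $(g \circ \mathsf{inl}, g \circ \mathsf{inr}, g \circ \mathsf{glue\text{-}data}, \dots)$; evaluating at the point of $\One$ gives $g(\North)$ and $g(\South)$; and the path-component is computed from $\ap_g$ applied to the glue/meridian path. This is the step I expect to be the main obstacle, since it requires the individual building-block equivalences to compute well on the relevant constructors — in particular that the action of the pushout recursor on the meridian is $\ap_g(\merid(a))$, and that the contraction steps do not mangle this path. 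In a formalization one arranges the building blocks so that this identification holds definitionally, so that the proof assistant discharges it by reflexivity; this is precisely the kind of situation the refinement in \cref{sec:3-for-2} is designed to handle, and one could alternatively present the whole argument in that style, fitting $g \mapsto (g(\North), g(\South), \dots)$ into a commutative diagram of already-established equivalences and reading off the conclusion from 3-for-2.
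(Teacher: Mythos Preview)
The paper does not actually prove this proposition: it records it as a known fact, citing \cite[Exer.~6.11]{HoTTBook}, with the one-line gloss that it ``is just the universal property of the suspension as a pushout.'' There is therefore no proof in the paper to compare against.

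Your sketch is broadly correct and in the spirit of that gloss. One inaccuracy worth flagging: the standard cocone under a span \(C \xleftarrow{l} A \xrightarrow{r} D\) with apex \(B\) consists only of maps \(i : C \to B\), \(j : D \to B\) and a single homotopy \(i \circ l \sim j \circ r\); there is no separate map \(k : A \to B\) with two homotopies. So after invoking the pushout universal property and the equivalence \((\One \to B) \simeq B\), you land \emph{directly} on \(\SigmaT{b_N : B}{\SigmaT{b_S : B}{(A \to b_N = b_S)}}\), and the singleton-contraction step you describe is unnecessary. (Your redundant cocone type is of course equivalent to the standard one, precisely by that contraction, but it is not what the universal property hands you.) With this simplification, the identification of the composite equivalence with \(g \mapsto (g(\North),g(\South),\lambdadot{a}{\ap_g(\merid(a))})\) is immediate from the computation rules for the pushout.
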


\begin{theorem}
  The suspension of an \(n\)-connected type is \((n+1)\)-connected.
\end{theorem}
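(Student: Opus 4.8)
The plan is to apply the characterization of connectedness in \cref{connectedness-characterization} twice. First, by its ``if'' direction, to prove that \(\susp A\) is \((n+1)\)-connected it is enough to fix an \((n+1)\)-type \(B\) and show that the constants map \(\consts \colon B \to (\susp A \to B)\), \(b \mapsto \lambdadot{a}{b}\), is an equivalence. I would then post-compose \(\consts\) with the equivalence of \cref{suspension-universal-property}, obtaining
\[
  B \xrightarrow{\consts} (\susp A \to B) \xrightarrow{\ \simeq\ } \SigmaT{b_N : B}{\SigmaT{b_S : B}{\pa*{A \to b_N = b_S}}},
\]
and invoke 3-for-2: since the second map is an equivalence, it suffices to show the composite is one. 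Unfolding the definitions, the composite sends \(b\) to \(\pa*{b,\, b,\, \lambdadot{a}{\ap_{\consts(b)}(\merid(a))}}\), and since \(\ap\) of a constant map is pointwise reflexivity, this composite is homotopic to the map \(c\) given by \(b \mapsto \pa*{b,\, b,\, \lambdadot{a}{\mathrm{refl}_b}}\). So the problem reduces to showing that \(c\) is an equivalence.

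To handle \(c\), I would decompose it, in the style of \cref{sec:decomposition}, as
\[
  B \xrightarrow{\ \simeq\ } \SigmaT{b_N : B}{\SigmaT{b_S : B}{(b_N = b_S)}} \xrightarrow{\ \simeq\ } \SigmaT{b_N : B}{\SigmaT{b_S : B}{\pa*{A \to b_N = b_S}}}.
\]
The first map, \(b \mapsto \pa*{b, b, \mathrm{refl}_b}\), is an equivalence by contractibility of singletons (it is inverse to the first projection). The second map, \(\pa*{b_N, b_S, p} \mapsto \pa*{b_N, b_S, \lambdadot{a}{p}}\), is the dependent sum over \(b_N, b_S : B\) of the fibrewise map \((b_N = b_S) \to \pa*{A \to b_N = b_S}\); but that fibrewise map is precisely the constants map, which is an equivalence by \cref{connectedness-characterization} --- now its ``only if'' direction --- because \(A\) is \(n\)-connected and \(b_N = b_S\) is an \(n\)-type (as \(B\) is an \((n+1)\)-type). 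Since fibrewise equivalences induce equivalences on total spaces, the second map is an equivalence, and the composite of the two is manifestly \(c\). This completes the argument.

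Essentially everything here is bookkeeping: the only real ingredients are the two directions of \cref{connectedness-characterization}, the universal property \cref{suspension-universal-property}, and a few standard building-block equivalences. The one spot that needs a moment's care --- and where I would expect any residual friction --- is the identification \(\ap_{\consts(b)}(\merid(a)) = \mathrm{refl}_b\), which is just the elementary fact that \(\ap\) of a constant function is constant. As noted in \cref{sec:3-for-2}, in a formalization one can typically arrange (e.g.\ through the precise phrasing of \(\consts(b)\) or of the suspension recursor) for this to hold definitionally, so that the composite above is \emph{literally} \(c\) and the coherence is discharged automatically.
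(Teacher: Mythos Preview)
Your argument is correct and uses exactly the same ingredients as the paper (both directions of \cref{connectedness-characterization}, \cref{suspension-universal-property}, contractibility of singletons, and the fibrewise constants equivalence). The genuine difference is the \emph{orientation} of the diagram. You post-compose \(\consts\) with the universal-property equivalence and then decompose a map \emph{into} the complicated type \(\SigmaT{b_N}{\SigmaT{b_S}{(A \to b_N = b_S)}}\); the paper instead introduces the retraction \(\eval : (\susp A \to B) \to B\), observes \(\eval \circ \consts \equiv \id_B\), and decomposes \(\eval\) with every arrow pointing \emph{toward} the simple type \(B\). The payoff is precisely the friction you flag: since the target of the paper's diagram is just \(B\) and the later maps merely project out the first component, commutativity holds definitionally without ever inspecting the third component, so the coherence \(\ap_{\consts(b)}(\merid(a)) = \mathrm{refl}_b\) never even comes up. Your route is perfectly sound and you correctly isolate the one step needing care, but the paper's route is a deliberate illustration of the heuristic ``orient toward the simplest type'' engineered to eliminate that step entirely.
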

\begin{proof}
  Let \(A\) be an \(n\)-connected type and \(B\) an \((n+1)\)-type. By
  \cref{connectedness-characterization}, we need to show that the constants map
  \(\consts : {B \to (\susp A \to B)}\) is an equivalence.
  We first consider the evaluation map
  \[
    \eval : (\susp A \to B) \to B
    \quad\text{defined by}\quad
    \eval(g) \colonequiv g(\North),
  \]
  and note that the diagram
  \[
    \begin{tikzcd}[column sep=1cm]
      B \ar[rr,bend right,"\id_B"] \ar[r,"{\consts}"]
      & (\susp A \to B) \ar[r,"{\eval}"] & B
    \end{tikzcd}
  \]
  commutes definitionally.
  Hence, by \textbf{3-for-2}, it suffices to prove that \(\eval\)
  is an equivalence.
  We use the \textbf{decomposition} technique to do so and consider the
  following diagram.
  \begin{equation}\label{example-3-for-2-diagram}
    \begin{tikzcd}[column sep=0cm,row sep=1cm]
      (\susp A \to B)
      \ar[rrr,"\eval"]
      \ar[dr,"{g \;\mapsto\; (g(N),g(S),\dots)}","\simeq"']
      & & & B \\
      & \hspace{-1.8cm}\SigmaT{b_N : B}{\SigmaT{b_S : B}{(A \to b_N = b_S)}}
      \ar[r,"{(b_N,b_S,\varphi) \;\mapsto\; (b_N,b_S,\dots)}"',"\simeq",bend right]
      & \hspace{.2cm}\SigmaT{b_N : B}{\SigmaT{b_S : B}{(b_N = b_S)}}\hspace{-.3cm}
      \ar[ur,"{(b_N,b_S,\dots) \;\mapsto\; b_N}","\simeq"']
    \end{tikzcd}
  \end{equation}
  Note that because \(A\) is \(n\)-connected and \(b_N = b_S\) is an \(n\)-type
  (since \(B\) is an \((n+1)\)-type), the constants map
  \((b_N = b_S) \to (A \to b_N = b_S)\) is an equivalence.
  The middle equivalence in the diagram~\eqref{example-3-for-2-diagram} is
  induced by the inverse of this map.
  It should be stressed that the definitional behaviour of this inverse is
  completely irrelevant for verifying the commutativity of the
  diagram~\eqref{example-3-for-2-diagram}; we only need to know the middle map's
  behaviour on the first two components of the \(\Sigma\)-types to see that the
  diagram commutes definitionally.

  Finally, the first map in the decomposition of \(\eval\) is an equivalence by
  \cref{suspension-universal-property} and the last map is an equivalence by
  \textbf{contractibility of singletons} (from
  page~\pageref{proj-equivalences}).
\end{proof}

We end this note by pointing out an important \textbf{heuristic} that is nicely
illustrated by the above proof:
\emph{We always try to orient the maps in our diagrams towards the simplest
possible type}---which in \eqref{example-3-for-2-diagram} is clearly \(B\).
The reason for this is that checking commutativity should then be the easiest as
there is relatively little data in the target type to compare.
Similarly, we defined a section of \(\eval\), rather than directly defining a
map \(B \to (\susp A \to B)\) and proving it's an inverse to \(\eval\) which
would involve the cumbersome task of comparing functions \(\susp A \to B\) for
equality.

\bibliography{references}

\end{document}